\crefname{assumption}{assumption}{assumptions}
\Crefname{assumption}{Assumption}{Assumptions}
\crefname{proposition}{proposition}{propositions}
\Crefname{proposition}{Proposition}{Propositions}
\newcommand{\indep}{\perp \!\!\! \perp} 
\newcommand{\mathdash}{\relbar\mkern-9mu\relbar}
\newcommand{\EV}[1]{\mathbb{E}\left[#1\right]} 
\newcommand{\bX}{\mathbf{X}}
\newcommand{\bU}{\mathbf{U}}
\newcommand{\bx}{\mathbf{x}}
\newcommand{\bF}{\mathbf{F}}
\newcommand{\bG}{\mathbf{G}}
\newcommand{\cG}{\mathcal{G}}
\newcommand{\XXX}{\mathcal{X}}
\newcommand{\YYY}{\mathcal{Y}}
\newcommand{\cdf}{\mathrm{P}}
\newcommand{\pdf}{\mathrm{p}}
\newcommand{\cdfQ}{\mathrm{Q}}
\newcommand{\Corr}{\mathrm{Corr}}
\renewcommand{\set}[1]{\{#1\}} 
\newcommand*\absoluteValue[1]{\left\lvert #1\right\rvert}
\DeclareMathOperator*{\argmin}{arg\,min}
\newcommand{\icmaxent}{i\=/CMAXENT\ } 
\newcommand{\ddo}{\mathrm{do}}
\newcommand\extrafootertext[1]{%
    \bgroup
    \renewcommand*{\footnoteseptext}{}
    \renewcommand\thefootnote{\fnsymbol{footnote}}%
    \renewcommand\thempfootnote{\fnsymbol{mpfootnote}}%
    \footnotetext[0]{#1}%
    \egroup
}
\tikzset{>=stealth'} 
\tikzstyle{graphnode} = 
\tikzstyle{var}   =[graphnode,fill=white]
\tikzstyle{vardashed}   =[graphnode,draw=gray,fill=white]
\tikzstyle{obs}   =[graphnode,fill=black,text=white]
\tikzstyle{obsgrey}   =[graphnode,draw=white,fill=lightgray,text=black]
\tikzstyle{par}    =[graphnode,draw=white,fill=red,text=black] 
 \tikzstyle{crucial} =[graphnode,draw=white,fill=yellow,text=black] 
\tikzstyle{fac}   =[rectangle,draw=black,fill=black!25,minimum size=5pt]
\tikzstyle{facprior} =[rectangle,draw=black,fill=black,text=white,minimum size=5pt]
\tikzstyle{edge}  =[draw=white,double=black,very thick,-]
\tikzstyle{blueedge}  =[draw=white,double=blue,very thick,-]
\tikzstyle{rededge}  =[draw=white,double=red,very thick,-]
\tikzstyle{prior} =[rectangle, draw=black, fill=black, minimum size=
\tikzstyle{dirprior} = [circle, draw=black, fill=black, minimum
\tikzstyle{dot_node}=[draw=black,fill=black,shape=circle]
\title[Estimating Joint Interventional Distributions]{Estimating Joint Interventional Distributions from Marginal Interventional Data}
\begin{document}

\maketitle
\extrafootertext{* denotes equal contribution.}

\begin{abstract}
In this paper we show how to exploit \emph{interventional} data to acquire the joint conditional distribution of all the variables using the Maximum Entropy principle.
To this end, we extend the Causal Maximum Entropy method to make use of data arising from identifiable interventional distributions in addition to data from the observational distribution.
Using Lagrange duality, we prove that the solution to the Causal Maximum Entropy problem with interventional constraints lies in the exponential family, as in the Maximum Entropy solution.
Our method allows us to perform two tasks of interest when marginal interventional distributions are provided for any subset of the variables.
First, we show how to perform parental discovery from a mixture of observational and single-variable interventional data, and, second, how to infer joint interventional distributions.
For the former task, we show on synthetically generated data, that our proposed method outperforms the state-of-the-art method on merging datasets, and yields comparable results to the KCI-test which requires access to joint observations of \emph{all} variables.
\end{abstract}

\begin{keywords}%
  Causal Maximum Entropy, Causal Marginal Problem, Parental Discovery
\end{keywords}

\section{Introduction}

Randomised Controlled Trials (RCTs) are the standard method for identifying the causal effect of a treatment on a target variable. 
However, their utility is often constrained by cost when studying complex interventions. 
Determining how a new drug, for instance, interacts with existing medications or medical procedures requires testing a number of combinations that grows exponentially. 
Furthermore, research goals frequently extend beyond the average treatment effect to understanding its impact under specific combinations of conditions (\textit{e.g.}, medical pre-conditions or risk factors). 
While this conditional knowledge is crucial and must be recorded during the trial, designing an RCT capable of considering all potentially relevant factors proves challenging. 
This difficulty often leads to research where multiple independent studies investigate different aspects or subsets of treatments and conditions. 
Consequently, researchers rarely have access to the full conditional distribution of the target variable or the joint interventional distribution of multiple treatments. 
This fundamental lack of joint data complicates the identification of whether a treatment or condition has a direct causal effect or merely an indirect influence through another factor.

As another real-world example beyond the medical domain, consider the problem of finding the effect of different fertilisers and planting methods on a particular crop yield \citep{hindersah2022rice}.
As the number of fertilisers and planting methods increases, the experimental design becomes prohibitively expensive due to combinatorial explosion.
Nevertheless, a researcher interested in the combined effect might have observational data and data from single experiments, such as nitrogen \citep{qiu2022effect} or potassium \citep{wihardjaka2022effect} fertilisers on crop yield.
To address this challenge, we propose a method for combining experimental and observational data to infer joint interventional distributions.

To do this, we extend the Causal Maximum Entropy (CMAXENT) principle~\citep{janzing2021causal} to the interventional-CMAXENT (\emph{i-CMAXENT}). 
We show that the resulting distribution for \icmaxent lies in the exponential family, similar to the traditional Maximum Entropy (MAXENT) distribution~\citep{wainwright2008graphical}.
This allows us to perform two tasks of interest:
(i) We can estimate joint interventional distributions from single variable interventions \citep{saengkyongam2020learning,kekic2025learning} in the marginal causal problem setting \citep{gresele2022causal}.
(ii) We can perform parental discovery \citep{peters2016causal,heinze2018invariant} under the causal marginal setting.
That is, under certain graph constraints, it allows us to infer the true causal parents of a variable of interest from a set of potential causes, even when the variables are not jointly observed.

The contributions of this paper are as follows:
\begin{itemize}
    \item We extend the Causal Maximum Entropy principle (CMAXENT) to use data from experimental conditions (\Cref{sec:icmaxent}).
    \item We prove that given a set of statistical constraints, the solution to the proposed method lies in an exponential family of distributions, extending the scope of these to include causal semantics (\Cref{sec:icmaxent}).
    \item We provide conditions under which this method can be used to combine observational and experimental data (\Cref{sec:icmaxent}), which is only possible nonparametrically under strict constraints (\Cref{sec:motivation}), and test the  results empirically against usual CMAXENT (\Cref{sec:experiments,sec:results}).
\end{itemize}

\section{Related work}
\label{application_related_work}

\paragraph{Marginal problem and causality} Various methods address the statistical problem of merging information from datasets with overlapping subsets of random variables, called the \emph{Marginal Problem} \citep{deming1940least,kellerer1964}.
The problem of combining information from overlapping data in causal structure learning has been studied in \citep{danks2008integrating,tillman2011learning, dhir2020integrating}.
However, the structures they are able to learn are based on conditional independence tests within the overlapping datasets; in other words, they have to observe variables jointly.
Recently, an extension of this problem was introduced, the \emph{Causal Marginal Problem}, where information from disjoint datasets is merged into a single \emph{causal} model \citep{gresele2022causal, sani2023bounding}.
\citet{gresele2022causal}, for example, address whether data from subsets of variables together with a known graph structure can be used to determine a set of joint SCMs that are counterfactually consistent with the marginal data.
Their work differs from ours in that we are interested in finding joint interventional distributions, while they focus on bounding counterfactual quantities, which allows them to falsify causal models. 
Further, they assume the causal graph to be given, while our approach can be used for parental discovery.

\paragraph{Joint interventional distributions} A recent object of research interest is the conditions under which joint causal effects can be identified from interventions on variable subsets, single variable interventions being the limiting case.
This question has been tackled from both a parametric and a nonparametric point of view.
From the nonparametric perspective, causal effect identification from experimental data was studied by \citet{bareinboim2012causal}, where a sound and complete algorithm was introduced for z-identification of causal effects under experimental conditions.
This was extended later by \citet{lee2020general}, where the $g$-identification formula was introduced in scenarios where not all experimental data is available.
\citet{jung2023estimating} presented an estimator based on the $g$-identification formula with properties against bias.
\citet{tikka2021causal} developed a sound search-based algorithm, where heuristics and search reduction techniques are able to decide (and provide a nonparametric estimator, if possible) for joint effect identification from observational and experimental data.

From the parametric perspective, \citet{saengkyongam2020learning} studies identification in nonlinear Gaussian models with confounding between the covariates and the outcomes, while \citet{kekic2025learning} provides an unbiased estimator of joint interventional effects from single variable interventions without distributional assumptions.
However, they rely on an additive outcome mechanism assumption.
\citet{gimenez2022causal}, explore the identifiability of joint causal effects under different assumptions of the problem such as linear and nonlinear models, the existence of instrumental variables and sparsity.
The potential outcome literature has also studied this question from a parametric point of view (see \citep{shi2023data,colnet2024causal} and references therein).

The complementary question of the conditions under which single variable interventions can be identified from joint causal effects has also been studied.
\citet{jeunen2022disentangling}, for example, find the conditions under which single variable interventions can be identified from joint interventions in confounded additive noise models.
In addition to the previous result, \citet{elahi2024identification} study how to obtain all possible causal effects from only some joint interventions in additive noise models with Gaussian noise.

\paragraph{Causal discovery} The task of parental discovery has been addressed from various perspectives, with the most prominent ones being those based on invariant causal prediction \citep{peters2016causal,heinze2018invariant}.
While these methods are powerful and do not assume knowledge about which variables in the system were intervened upon, they cannot operate in the setting of marginally observed sets of variables, such as the causal marginal problem.

Furthermore, other research has focused on causal discovery using observational and experimental data using either Bayesian inference methods \citep{cooper1999causal, eaton2007exact}, causal graph conditions \citep{tian2001causal}, meta-analysis methods for combining $p$-values \citep{tillman2009structure}, or constraint-based methods \citep{triantafillou2015constraint}.
A particularly important idea in this line of research is the Joint Causal Inference (JCI) framework \citep{mooij2020joint}, where these ideas were unified.
However, none of these methods work under the causal marginal problem, which limits their application in heavy missing data problems.

A method that allows for parental discovery, even under this marginal setting is CMAXENT \citep{garrido2022obtaining}.
However, CMAXENT can only make use of observational data, limiting the amount of information that it can leverage to find the relevant causal parents.
An extensive comparison of our approach with CMAXENT can be found in \cref{sec:results}.

\section{Motivation}
\label{sec:motivation}

In this section we will prove that even in cases where the parents of the target variable are confounded we can identify nonparametric joint interventional effects from observational or interventional marginal data. 
This holds true even in cases where these joint interventional distributions cannot be identified using the rules of do-calculus \citep{pearl2009causality}.

\begin{figure}
    \centering

\begin{tikzpicture}
\node[latent] (U) {$U$};
\node[below=1cm of U] (placeholder) {};
\node[obs,right=.5cm of placeholder] (X2) {$X_2$};
\node[obs,left=.5cm of placeholder] (X1) {$X_1$};
\node[obs,below=1cm of placeholder] (Y) {$Y$};

\edge{U}{X1, X2};
\edge{X1}{Y};
\edge{X2}{Y};

\end{tikzpicture}
    \caption{Graph that allows nonparametric marginal data combination}
    \label{fig:nonparametric-basic-graph}
\end{figure}
Consider a causal system with binary variables $X_1,X_2,Y$ and $U$, where $U$ is an unobserved confounder, and the joint distribution $\pdf(\bX,Y) := \pdf(X_1,X_2,Y)$ is Markov relative to the graph in \Cref{fig:nonparametric-basic-graph}.
We are interested in $\pdf(Y\mid \ddo(X_1=x_1,X_2=x_2))$, which in the case of sufficient causal systems would correspond to $\pdf(Y\mid X_1=x_1,X_2=x_2)$.
We have $\pdf(Y\mid X_1)$, $\pdf(Y\mid X_2)$ and $\pdf(\bX)$ as data.
Using a modern interventional distribution identification engine \citep{tikka2021causal}, we verify that the joint interventional distribution is not identifiable using the given data.
Nonetheless, we can still identify the joint interventional distribution nonparametrically:
\begin{restatable}[Nonparametric identification through observational data combination]{proposition}{nonparametricIdentification}\label[proposition]{prop:nonparametricIdentification}
Let $\bX, Y$ be binary random variables and suppose $\pdf(\bx)>0$, for all $\bx$.
Then the joint interventional distribution $\pdf(Y\mid \ddo(X_1=x_1,X_2=x_2))$ is identifiable using $\pdf(\bX)$ and $\pdf(Y\mid X_i)$ for $i=1,2$.
\end{restatable}
\begin{remark}
    In the statement above, it is not necessary that $X_i$ does not cause $X_j$. 
\end{remark}

The proof is based on the fact that there are four equations, namely one for each $\pdf(Y \mid X_i = x_i)$, and four unknowns $\pdf(Y \mid X_1=x_1, X_2=x_2)$.
All proofs can be found in \Cref{app:proofs}.
In the particular case of the graph in $\Cref{fig:nonparametric-basic-graph}$ the backdoor criterion holds, so that if we had interventional distributions instead of observational ones, the target distribution is also identifiable, giving us the following Corollary:
\begin{corollary}[Nonparametric identification through single variable intervention data combination]
Let $\bX, Y$ be binary random variables and suppose $\pdf(\bx)>0$, for all $\bx$.
Then the joint interventional distribution $\pdf(Y\mid \ddo(X_1=x_1,X_2=x_2))$ is identifiable using $\pdf(\bX)$ and the single-variable interventions given by $\pdf(Y\mid \ddo(X_i=x_i))$ for $i=1,2$, as long as each $\pdf(Y\mid \ddo(X_i=x_i))$ is identifiable.
\end{corollary}

We see how we cannot apply this idea for more than two variables using either only observational or interventional data.
Indeed, the number of unknowns grows exponentially with the number of variables, whereas the number of equations grows linearly. 
Nevertheless, the same result can be applied once more by combining observational and interventional data in the case of four variables:
\begin{corollary}
Let $\bX = \set{X_i: i=1,2,3,4}, Y$ be binary random variables and $\pdf(\bx)>0$ for all $\bx$.
If $\pdf(Y\mid \ddo(X_i=x_i))$ are identifiable using the backdoor criterion for all $i$, we can identify the joint interventional distribution $\pdf(Y \mid \ddo(\bX))$ using $\pdf(\bX)$, $\pdf(Y\mid X_i)$ and $\pdf(Y\mid\ddo(X_i=x_i))$ for all $i$.
\end{corollary}
This result motivates our search for methods that allow us to combine both observational and interventional data beyond four variables.

\section{Causal Maximum Entropy}
\label{sec:cmaxent}

In this section, we formally state the Maximum Conditional Entropy problem (\citet[Chapters 8 and 20]{koller2009probabilistic}, \citet{berger1996maximum}), and its causal interpretation, Causal Maximum Entropy (CMAXENT) \citep{sun2006causal,janzing2021causal,garrido2022obtaining}, which we will use as a basis for our proposed model in \Cref{sec:icmaxent}.

Consider a set of $D$ random variables $\bX=\{X_{1},\dots,X_{D}\}$ of \emph{potential causes}, and an \emph{effect} $Y$, with realisations $\bx\in\mathcal{X}$ and $y\in\mathcal{Y}$.
We define $\cdf$ to be a probability measure on $\YYY\times \XXX$.

Let $\bF=\{f_{k}\}$ be a set of $K$ \emph{marginal} measurable functions with $f_{k}:\mathcal{Y}\times\mathcal{X}_{S_{k}} \rightarrow \mathbb{R}$, where $S_{i}\subseteq\{1, \dots,D\}$ is an index set of potential causes. 
For these functions, we are given empirical averages denoted by
$\tilde{f}_{k} := \frac{1}{N}\sum^{N}_{i=1}f_{k}(y^{i},\bx^{i}_{S_{k}})$, where $N$ is the number of observations of $y^i$ and $\bx_{S_k}^i$.
The expectations of the marginal functions can be computed with respect to any valid joint density of $Y$ and $\bX$: $\mathbb{E}[f_{k}(Y,\bX_{S_k})]=\sum_{y,\bx}\pdf(y,\bx)f_{k}(y,\bx_{S_{k}})$. 

Furthermore, let $J$ be a set of \emph{conditional} measurable functions $\bG=\{g_{j}\}$ with $g_{j}:\mathcal{Y}\times\mathcal{X}_{S_{j}} \rightarrow \mathbb{R}^{\lvert\bX_{S_j}\rvert}$.
For conditional functions, the empirical averages $\tilde{g}_{j}(\bx_{S_{j}}) := \frac{1}{N}\sum^{N}_{i=1}g_{j}(y^{i},\bx_{S_{j}})$ depend on the value of the conditioning variables $X_{S_{j}}$, where the conditioning set, $S_{j}$ is different for every conditional function $g_{j}$.
The expectations are computed with respect to a valid conditional distribution: ${\mathbb{E}[g_j(Y,\bx_{S_j}) \mid \bX_{S_j} = \bx_{S_j}]= \sum_{y}\pdf(y\mid\bx_{S_{j}})g_{j}(y,\bx_{S_{j}})}$.

Suppose we are given a density of the potential causes $\pdf(\bX)$ and the sets of functions $\bF$ and $\bG$ with their respective empirical averages, but we do not know the conditional density $\pdf(Y\mid \bX)$. 
Then the principle of Maximum Conditional Entropy suggests choosing the conditional density $\pdf_{\lambda}(Y\mid \bX)$ that has expectations consistent with the given empirical averages and also maximises the Shannon conditional entropy $H(Y\mid X)$ \citep{jaynes1957information,berger1996maximum,farnia2016minimax}:
\begin{align}\label{eq:cmaxentopt}
    \begin{split}
    \max_{\pdf(y\mid\bx)}&\quad H(Y\mid\bX) = -\sum_{y,\bx}\pdf(y\mid\bx)\pdf(\bx)\log \pdf(y\mid\bx)\\
    \text{s.t.}\quad & \mathbb{E}[f_{k}(Y,\bX_{S_k})]=\tilde{f}_{k},\,\text{ for all } k=1,\dots,K\\
    & \mathbb{E}[g_{j}(Y,\bx_{S_j}) \mid \bX_{S_j} = \bx_{S_j}]=\tilde{g}_{j}(\bx_{S_{j}}),\,\text{ for all } j=1,\dots,J\\
    & \sum_{y}\pdf(y\mid\bx)=1,\,\text{ for all } \bx.
    \end{split}
\end{align}
Solving this optimisation problem, using the Lagrange multiplier formalism, yields an exponential family distribution:
\begin{align}\label{eq:condmaxentsol}
    \pdf_{\lambda}(y\mid\bx) =
    \exp \Biggl( \sum_{k=1}^{K}\lambda_{k}f_{k}(y,\bx_{S_k})
    + \sum_{j=1}^{J}\sum_{\bx_{S_{j}}}\lambda^{\bx_{S_{j}}}_{j} g_{j}(y,\bx) + \alpha(\bx)\Biggr),
\end{align}
where $\alpha(\bx)$ is the normalising constant and $\lambda=\{\lambda_{k},\lambda^{\bx_{S_{j}}}_{j}\}$ are the Lagrange multipliers.

Intuitively, this density with maximum entropy is as close as possible to the uniform while satisfying the expectation constraints.
\citet[Chapter 11]{jaynes2003probability} interprets MAXENT as a way to find a density without introducing more information than given by the data.
Critically, MAXENT, Maximum Conditional Entropy, or CMAXENT do not aim at estimating the ``true distribution'' of the data, but instead provide a guess given certain statistical properties of the data.
The resulting MAXENT distribution posses powerful statistical properties that have been deeply studied in the literature \citep{grunwald2004game, wainwright2008graphical, farnia2016minimax}.

In CMAXENT \citep{sun2006causal, janzing2009distinguishing, janzing2021causal, garrido2022obtaining}, causal semantics are introduced via graphical models.
In other words, we assume that the variables in our system have some cause-effect relation that can be represented by a causal graph \citep{pearl2009causality}.
In CMAXENT, the densities are computed in the causal order given by a (hypothesised) causal graph.
For example, if $\bX$ are potential causes of $Y$, we first find the MAXENT density of $\bX$ subject to constraints on $\bX$, and then the density with Maximum Conditional Entropy of $Y$ given $\bX$ using the found $\pdf(\bX)$ and subject to the constraints associated with $\bX$ and $Y$.
Because of the introduced causal semantics, we can use expectations involving interventions in the estimation of the density $\pdf_{\lambda}(y\mid\bx)$, which corresponds to the so-called Independent Causal Mechanisms \citep{scholkopf2012causal} of $Y$ given its potential causes $\bX$.

The use of interventional data would not be possible for non-causal Maximum Conditional Entropy, since the conditional distribution there is devoid of any causal structure and hence cannot be used for operations of the causal hierarchy like interventions or counterfactuals \citep{pearl2018theBook}. 
In \cref{sec:icmaxent}, we exploit this observation to extend the CMAXENT method to use data from interventions.

\section{Interventional CMAXENT (\icmaxent)}\label{sec:icmaxent}

In this section, we introduce \icmaxent, the modification of CMAXENT to include data on interventional distributions, and prove that the solution to the optimisation problem is an exponential family distribution, as in traditional MAXENT. 

Before formally introducing \icmaxent, we present the necessary assumptions needed to use interventional data in the optimisation problem.

\begin{assumption}[Weak causal sufficiency]
\label[assumption]{asmp:weakerSufficiency}
    We assume there are no unobserved confounders between the effect $Y$ and its potential causes $\bX$.
    However, there \emph{can} exist hidden confounders among the potential causes.
\end{assumption}
\Cref{asmp:weakerSufficiency} is needed to exclude any backdoor paths between any potential cause and the target $Y$.
More specifically, if hidden confounders exist between $\bX$ and $Y$, then we cannot phrase the interventional moments in terms of the distributions of observed nodes alone.

\begin{assumption}[Positivity of the potential causes]
\label[assumption]{asmp:probabilityZero}
    We assume $\pdf(\bx)>0$ for all $\bx$.
\end{assumption}
\Cref{asmp:probabilityZero} is required to ensure well-defined conditional distributions for all values $\bx$ of $\bX$.

We also assume Faithful $f$-expectations as in \citet[Definition 1]{garrido2022obtaining}, this assumption is used to relate the results of statistical estimation with a causal graph, as in the usual faithfulness assumption.
\begin{assumption}[Faithful $f$-Expectations]
\label[assumption]{asmp:faithfulFexpectations}
A distribution $\cdf$ is said to have faithful $f$-expectations relative to a DAG $\cG$, if for any distribution $\cdfQ$ Markov relative to $\cG$ it holds that whenever $\lambda_{f_k}^\cdfQ\neq 0$, then $\lambda_{f_k}^\cdf\neq 0$.
\end{assumption}

For an intuitive explanation, consider a triplet $X, Y, Z$ drawn from a distribution $\cdf$ Markov relative to a DAG $\cG$, for which $X \not\indep_\cdf Y \mid Z$.
Then according to \Cref{asmp:faithfulFexpectations}, the dependence of $X$ and $Y$ after observing $Z$, will also hold in the projected space of the MAXENT distribution. 
By contrapositive, an independence in the space of MAXENT will also hold in the original distribution space.
In \Cref{app:faithfulness} we discuss other definitions of faithfulness in the literature and how Faithful $f$-expectations relates to these different notions.

Next, let us introduce the required additional notation.
Let $\mathbf{H}=\set{h_l}$ be a set of $L$ \emph{interventional functions} for which the empirical averages are denoted by $\tilde{\mathbf{H}}$.
Further, suppose that, in addition to a conditioning set $S^C_l$, we also have a set $S^I_l$ of variables that were intervened upon.
For these functions, the expectations are computed with respect to the interventional density $\pdf(Y\mid \ddo(\bX_{S^I_l}), \bX_{S^C_l})$:
\begin{align}\label{eq:int_function_expectation}
    \mathbb{E}&[h_l(Y, \bx_{S^I_l}, \bx_{S^C_l}) \mid \ddo(\bX_{S^I_l} = \bx_{S^I_l}), \bX_{S^C_l} = \bx_{S^C_l}]
    = \sum_{y} \pdf(y\mid \ddo(\bx_{S^I_l}), \bx_{S^C_l}) h_l(y,\bx_{S^I_l}, \bx_{S^C_l}).
\end{align}
We assume $S^C_{l}\cap S^I_{l}=\emptyset$ for all indices $l$, as otherwise it would imply that the variables in the distributions we used to compute the empirical averages are both conditioned and intervened upon.

The reasoning behind the extension to \icmaxent is the following:
for a given interventional function $h_l$, if the corresponding interventional density ${\pdf(Y \mid \ddo(\bX_{S^I_l}), \bX_{S^C_l})}$ is identifiable, we can express the expectation in \Cref{eq:int_function_expectation} using the observational densities $\pdf(Y \mid \bX)$ and $\pdf(\bX)$.
In this case, we can estimate $\pdf_\lambda(Y \mid \bX)$ by fitting the parameters $\lambda$ such that the interventional expectations under $\pdf_\lambda(Y \mid \bX)$ are as close as possible to those given as constraints.
In other words, we leverage graphical nonparametric identification to write the interventional expression in terms of observational probabilities, allowing us to introduce interventional data as constraints for the joint density $\pdf_{\lambda}(Y, \bX)$.
As above, we find the distribution that maximises the conditional entropy such that its interventional expectations are close to the observed empirical averages.

We obtain the optimisation problem for \icmaxent by adding the following constraint to \Cref{eq:cmaxentopt}:
\begin{align}\label{eq:interventionalMAXENT}
    \mathbb{E}&[h_l(Y,\bx_{S^I_l},\bx_{S^C_l}) \mid \ddo(\bX_{S^I_l}=\bx_{S^I_l}), \bX_{S^C_l}=\bx_{S^C_l}]
    = \tilde{h}_l(\bx_{S^I_l}, \bx_{S^C_l}), \text{ for all } l=1, \dots, L.
\end{align}
Note that even though ${\pdf(y \mid \ddo(\bx_{S^I_l}), \bx_{S^C_l})}$ appears as part of the constraints, we consider it to be identifiable and hence can compute this interventional distribution as a functional of observational distributions.

By including empirical averages that come from interventions on $\bX$, we can use a richer class of data sources in comparison to CMAXENT, which could only use observational data.
In the following theorem, we study the resulting distribution of the \icmaxent optimisation problem.

\begin{restatable}[Exponential family of i-CMAXENT]{theorem}{interventionalMaxent}\label{th:interventionalMaxent}
Using the Lagrange multiplier formalism, the solution of \Cref{eq:cmaxentopt} with the additional constraint from \Cref{eq:interventionalMAXENT} is given by the following exponential family:
    \begin{align}
    \begin{split}
       \pdf_{\lambda}(y\mid \bx) 
       =& \exp \Biggl( \sum_{k=1}^{K}\lambda_{k}f_{k}(y,\bx)
        + \sum_{j=1}^{J}\sum_{\bx_{S_{j}}} \lambda^{\bx_{S_{j}}}_{j}g_{j}(y, \bx) \\
        &+ \sum_{l=1}^{L}\sum_{\bx_{S^I_l},\bx_{S^C_l}} \lambda^{\bx_{S^I_l}, \bx_{S^C_l}}_l h_l(y,\bx_{S^I_l}, \bx_{S^C_l}) + \beta(\bx) \Biggr), 
    \end{split}
    \end{align}
where $\beta(\bx)$ is the normalising constant, as in the conditional case.
\end{restatable}

In \Cref{th:interventionalMaxent}, we show that the well-known exponential family solutions of MAXENT and Maximum Conditional Entropy \citep{wainwright2008graphical, farnia2016minimax} can be generalised to the \icmaxent solution in an intuitive way.
The proof is shown in \Cref{app:proofs}.

\paragraph{\icmaxent for parental discovery}

In \citet{garrido2022obtaining}, we showed that causal edges can be inferred from the estimated Lagrange multipliers of the solution of the CMAXENT problem. 
Since we have \Cref{asmp:faithfulFexpectations}, which is required for those results to hold, and the solution of \icmaxent is an exponential family distribution, we can also use \icmaxent to infer causal edges.
Note that we need to know which interventional distributions are identifiable without requiring knowledge of which of the \emph{potential} causes \emph{actually} has a direct causal link to the effect variable if we want to use \icmaxent for parental discovery. 
The following proposition shows that this is possible.

\begin{restatable}[Identifiability and adjustment set of variables with only incoming arrows]{proposition}{identifiabilityIncomingArrows}
    \label[proposition]{prop:identifiabilityIncomingArrows}
    Let $\bX$ be a set of candidate causal parents of $Y$, which can be confounded.
    Assume we know the density $\pdf(\bX)$.
    If the only child of $X_j \in \bX$ is potentially, but not necessarily, $Y$, then $\pdf(Y \mid \ddo(X_j))$ is identifiable and a valid adjustment set for the atomic intervention is $\bX \setminus X_j$.
    That is, the rest of the potential causes.
\end{restatable}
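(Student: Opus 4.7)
The plan is to verify Pearl's back-door criterion for the candidate adjustment set $Z := \bX \setminus X_j$ with respect to $\ddo(X_j)$ and $Y$. Once the two back-door conditions are checked, identifiability of $P(Y\mid\ddo(X_j))$ follows from the adjustment formula $P(Y\mid\ddo(X_j)) = \sum_z P(Y\mid X_j, z)\,P(z)$, whose right-hand side is evaluable from the known marginal $P(\bX)$ and any conditional $P(Y\mid\bX)$, and $Z$ is the claimed adjustment set.

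Condition (i), that no element of $Z$ is a descendant of $X_j$, is immediate from the hypothesis: since the only potential child of $X_j$ is $Y$, the descendants of $X_j$ lie in $\{X_j, Y\}\cup\mathrm{desc}(Y)$, and each $X_i$ with $i\neq j$, being a candidate cause of $Y$, is a non-descendant of $Y$. Condition (ii), that $Z$ d-separates $X_j$ from $Y$ along every back-door path $p$, is the core step; I would handle it by a case analysis on the final edge of $p$ at $Y$. If this edge is $V\to Y$, then \Cref{asmp:weakerSufficiency} (no hidden confounders between $Y$ and $\bX$) guarantees that an observed ancestor of $Y$ appears on $p$ as a non-collider---in the simplest case $V$ itself, and in general an observed common ancestor encountered earlier by forking backwards along $p$---and this node lies in $Z$, so $p$ is blocked there. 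If the final edge is $Y\to V$, then $V$ and all its descendants are strict descendants of $Y$, hence outside $\bX$; the chain subcase would give a directed path from $Y$ to $X_j$, contradicting acyclicity together with $X_j$ being a non-descendant of $Y$, whereas the collider subcase is blocked at $V$ because neither $V$ nor any descendant of $V$ lies in $Z$.

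The hardest part is this second case of (ii), where $p$ enters $Y$ through one of its descendants: the blocking argument cannot happen at $V$ itself and must be extracted from the DAG structure by combining acyclicity with the implicit premise that the candidate causes $\bX$ are non-descendants of $Y$. A secondary subtlety is ensuring that, in the first case, \Cref{asmp:weakerSufficiency} can indeed be used to replace a potentially latent parent $V$ of $Y$ by an observed non-collider in $Z$; this reduces to noting that any latent ancestor of $Y$ which is also an ancestor of $X_j$ would be a hidden confounder and is thus excluded. Once both points are settled, the back-door criterion applies to $Z$, delivering the adjustment formula and hence both identifiability and the validity of $\bX\setminus X_j$ as an adjustment set.
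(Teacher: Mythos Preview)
Your approach is correct and takes a different route from the paper. The paper establishes identifiability by invoking Theorem~2 of \citet{tian2002general}---checking that no child of $X_j$ carries a bidirected edge---and then argues separately that $\bX\setminus X_j$ is a valid adjustment set by appealing to the three-level generative structure $U\to\bX\to Y$ together with \Cref{asmp:weakerSufficiency}. You instead verify Pearl's back-door criterion directly for $Z=\bX\setminus X_j$, which delivers both identifiability (via the adjustment formula) and the adjustment set in a single stroke, without Tian's result; this is more elementary and self-contained. Two minor remarks on your execution. First, under the paper's implicit three-level structure $Y$ has no children, so your $Y\to V$ branch is vacuous, and in the $V\to Y$ branch every parent of $Y$ already lies in $\bX$, so $V\in Z$ is a non-collider blocking $p$ immediately; your appeal to \Cref{asmp:weakerSufficiency} to locate an observed common ancestor further back along $p$ is only needed if one allows latent parents of $Y$ that are not confounders, which the paper's setup does not. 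Second, your ``chain subcase'' tacitly iterates along the path (a single chain at $V$ does not by itself yield a directed path $Y\to\cdots\to X_j$; one must repeat the chain/collider dichotomy at each subsequent node), but the argument goes through once stated inductively.
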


A different way of thinking about \Cref{prop:identifiabilityIncomingArrows} is that we can decide whether we can use interventional data in an \icmaxent estimation only by looking at the arrows between $\bX$ and without any knowledge of the arrows between $\bX$ and $Y$.

\section{Experiments}
\label{sec:experiments}

\begin{figure*}[t]
    \subfigure[Structure graph (a)\label{fig:graph_exp_a}]{\adjustbox{max width=.33\textwidth}{

\begin{tikzpicture}
\node[latent](U){$U_1$};
\node[obs,below=1cm of U](X1){$X_1$};
\node[obs,right=.5cm of X1](X2){$X_2$};
\node[obs,right=.5cm of X2](X3){$X_3$};
\node[obs,right=.5cm of X3](X4){$X_4$};
\node[obs,right=.5cm of X4](X5){$X_5$};
\node[latent,above=1cm of X4](V){$U_2$};
\node[obs,below=1cm of X3](Y){$Y$};

\edge{U}{X1,X2};
\edge{V}{X3,X4};
\edge[dashed]{X1}{Y};
\edge[dashed]{X2}{Y};
\edge[dashed]{X3}{Y};
\edge[dashed]{X4}{Y};
\edge[dashed]{X5}{Y};

\end{tikzpicture}}}
    \subfigure[Structure graph (b)\label{fig:graph_exp_b}]{\adjustbox{max width=.33\textwidth}{

\begin{tikzpicture}
\node[latent](U1){$U_1$};
\node[latent,right=.5cm of U1](U2){$U_2$};
\node[latent,right=.5cm of U2](U3){$U_3$};
\node[latent,right=.5cm of U3](U4){$U_4$};
\node[latent,right=.5cm of U4](U5){$U_5$};
\node[obs,below=1cm of U1](X1){$X_1$};
\node[obs,below=1cm of U2](X2){$X_2$};
\node[obs,below=1cm of U3](X3){$X_3$};
\node[obs,below=1cm of U4](X4){$X_4$};
\node[obs,below=1cm of U5](X5){$X_5$};
\node[obs,below=1cm of X3](Y){$Y$};

\edge{U1}{X1, X2, X3};
\edge{U2}{X2, X3, X4};
\edge{U3}{X3, X4, X5};
\edge{U4}{X4, X5, X1};
\edge{U5}{X5, X1, X2};
\edge[dashed]{X1}{Y};
\edge[dashed]{X2}{Y};
\edge[dashed]{X3}{Y};
\edge[dashed]{X4}{Y};
\edge[dashed]{X5}{Y};

\end{tikzpicture}}}
    \subfigure[Structure graph (c)\label{fig:graph_exp_c}]{\adjustbox{max width=.33\textwidth}{

\begin{tikzpicture}
\node[latent] (U) {$U_1$};
\node[obs,below=1cm of U] (X3) {$X_3$};
\node[obs,right=.5cm of X3] (X5) {$X_5$};
\node[obs,left=.5cm of X3] (X1) {$X_1$};
\node[obs,left=.5cm of X1] (X2) {$X_2$};
\node[obs,right=.5cm of X5] (X4) {$X_4$};
\node[obs,below=1cm of X3] (Y) {$Y$};

\edge{U}{X1, X2, X3, X4, X5};
\edge{X1}{X2, X3};
\edge{X5}{X3, X4};
\edge[dashed]{X1}{Y};
\edge[dashed]{X2}{Y};
\edge[dashed]{X3}{Y};
\edge[dashed]{X4}{Y};
\edge[dashed]{X5}{Y};

\end{tikzpicture}}}

    \caption{
    (\cref{fig:graph_exp_a}), (\cref{fig:graph_exp_b}), and (\cref{fig:graph_exp_c}) show the graph structures used for our synthetic experiments.
    We randomise the presence of the edges in the lower part of the graphs (dashed arrows).
    The solid arrows are always present in the shown way.}
\end{figure*}

We test \icmaxent both for parental discovery and for joint interventional distribution estimation. 
In all experiments, we find the Lagrange multipliers of the exponential family distribution by minimising the norm of the residuals between empirical averages and the corresponding expectations (see  \Cref{app:normMinimisation} for details on the norm minimisation and \Cref{app:convergence} for details about the optimisation convergence).

\textbf{Synthetic data generation}
All used causal graphs comply with \Cref{asmp:weakerSufficiency} and consist of three levels:
unobserved confounders $\bU$, potential causes $\bX$, and an effect $Y$. 
While we assume causal sufficiency for the lower part of the graph, that is, no hidden confounders between $\bX$ and $Y$, we do allow for hidden confounders $\bU$ to exist among the potential causes $\bX$.
The considered graph structures are shown in \Cref{fig:graph_exp_a,fig:graph_exp_b,fig:graph_exp_c}.
Further details on synthetic data generation are in \Cref{app:detailsDGP}.

\paragraph{Causal discovery} 
The main question of this task is: 
from a set of \emph{potential} causes $\bX$ of our variable of interest $Y$, which of those variables are \emph{actual} causal parents?
To keep our model as comparable as possible with previous work on CMAXENT, we use the same graph structures as in \citet{garrido2022obtaining}.
For each structure in \Cref{fig:graph_exp_a,fig:graph_exp_b,fig:graph_exp_c}, we sample 200 random graphs as explained in \Cref{app:detailsDGP}.
When randomising the true causes of $Y$, we ensure that there is at least one causal parent and that at least one potential parent is not a true cause, so that the ROC curve is always defined.

\paragraph{Parental discovery benchmarking \icmaxent, CMAXENT, and KCI}
In this setting, we compare \icmaxent against CMAXENT and the Kernel Conditional Independence (KCI) test \citep{zhang2011kernel}.
For \icmaxent, we use constraints on the single-variable interventional densities $\pdf(Y\mid \ddo(X_i))$ (in the case of binary variables, this coincides with $\EV{Y \mid \ddo(X_i)}$) for all potential causes $X_{i}$. 
Since we have five potential causes in each graph, this means we have data on five interventional distributions. 
For CMAXENT, we use constraints on the marginal conditional densities $\pdf(Y\mid X_{i})$ (in the case of binary variables, this coincides with $\EV{Y \mid X_i}$) for each of the five potential causes $X_{i}$.
Hence, we again have data on five conditional distributions in this case. 
For both \icmaxent and CMAXENT, we consider two cases: 
in the first case, we are also given data on the full joint observational density $\pdf(\bX)$. 
In the second case, only the marginal observational densities $\pdf(X_1), \dots, \pdf(X_5)$ are given.

In the second case, we estimate $\pdf(\bX)$ by merging the constraints on the marginals, also using MAXENT.
We generate 100 observations per case and per sampled graph.
From these samples, we compute the empirical averages used as constraints in the optimisation procedure.
For the KCI test, we assume that KCI has access to 1,000 samples from the joint density $\pdf(\bX, Y)$.
This does not reflect the causal marginal problem we are considering in this paper. 
In fact, KCI could not be run in such a scenario.
\footnote{Similarly, ICP \citep{peters2016causal} is not designed for the causal marginal problem.
Hence, we do not compare against it.} 
Nevertheless, we benchmark our method against KCI to show what performance can be achieved on this task with access to the full joint observational distribution.

\paragraph{Parental discovery with combinations of interventional and conditional information using \icmaxent}
In this setting, we evaluate the performance of \icmaxent for parental discovery when interventional information is available for a fraction of the variables, while for the other variables we only have conditional information. 
With this experiment, we want to emulate a scenario that often occurs in real datasets:
namely, that not all variables are intervenable.
In this scenario, we assume that we are given data on the full joint observational distribution $\pdf(\bX)$ of the potential causes.

We use \Cref{prop:identifiabilityIncomingArrows} to decide which potential parent can provide interventional instead of conditional data.
For example, in \Cref{fig:graph_exp_c}, it follows from \Cref{prop:identifiabilityIncomingArrows} that we can use $\pdf(Y \mid \ddo(X_2))$ as a constraint regardless of the existence of the dashed edges because it is identifiable.
However, $\pdf(Y \mid \ddo(X_1))$ is not identifiable if the edge between $X_1$ and $Y$ exists.
As a result, we can only use conditional expectations for $X_1$.

In both settings, we use the relative difference estimator defined in \citet{garrido2022obtaining} as the parameter for the ROC curves of CMAXENT and \icmaxent (for details see \Cref{app:relativeDifference}). 
For the ROC curve of the KCI test, we vary the threshold on the $p$-value of the test.

\paragraph{Joint interventional distributions from single interventions} 
We are interested in finding multivariate interventional distributions when our available data comes from single-variable interventions.
For this task, we use the DAG shown in \Cref{fig:graph_exp_a} and sample 200 random graph instantiations.

We perform five experiments to assess how varying the type and amount of constraints influences the estimation of the joint interventional density $\pdf(Y\mid \ddo(X_1, X_2))$. 
In all of these cases, we use 1,000 observations to compute the empirical averages:
(i) Two potential causes $X_i$ and $X_j$ are chosen at random, and we provide \icmaxent with constraints on $\pdf(Y \mid X_i, X_j)$.
Additionally, we provide constraints on single-variable interventional distributions for the rest of the potential causes; that is, $\pdf(Y \mid \ddo(X_k))$ for all $k \neq i,j$.
(ii) We only provide constraints on $\pdf(Y \mid X_i, X_j)$.
(iii) We only provide constraints on the single-variable interventional densities for all potential causes $\pdf(Y \mid \ddo(X_i))$ for all $i$. 
(iv) We only provide constraints on the single-variable conditional densities $\pdf(Y \mid X_i)$ for all potential causes.
This scenario coincides with CMAXENT.
(v) As an additional baseline, we finally estimate MAXENT without constraints.
We then estimate the joint interventional density ${\pdf(Y\mid \ddo(X_1, X_2))}$ for each graph and plot the residuals against the true distribution, as shown in \Cref{fig:violin_true_vs_estimated}.

\section{Results}
\label{sec:results}

\begin{figure*}
    \subfigure[ROC curve for graph (a) \label{fig:roc_overlay_a}]{\includegraphics[width=.33\textwidth]{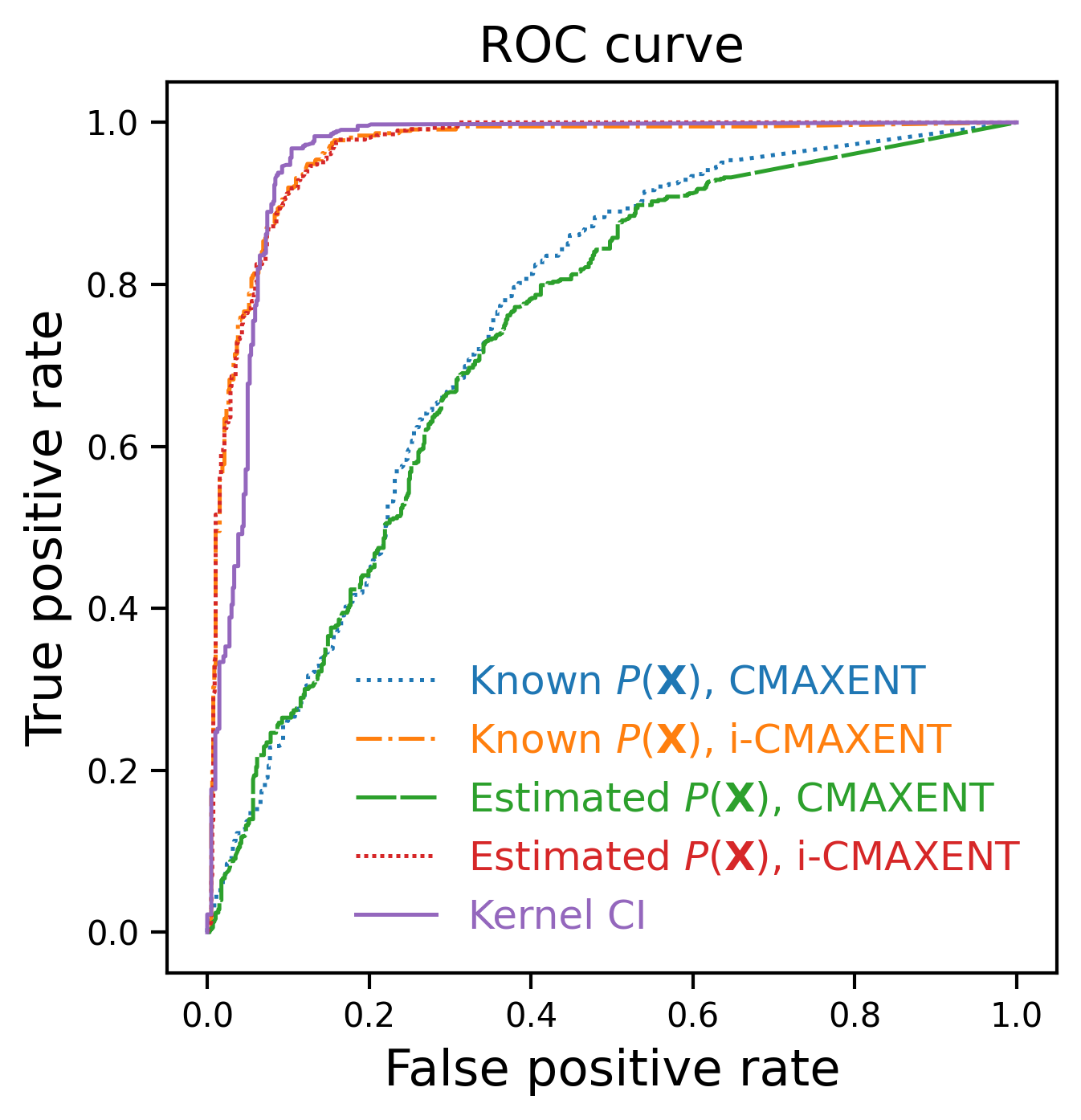}}
    \subfigure[ROC curve for graph (b) \label{fig:roc_overlay_b}]{\includegraphics[width=.33\textwidth]{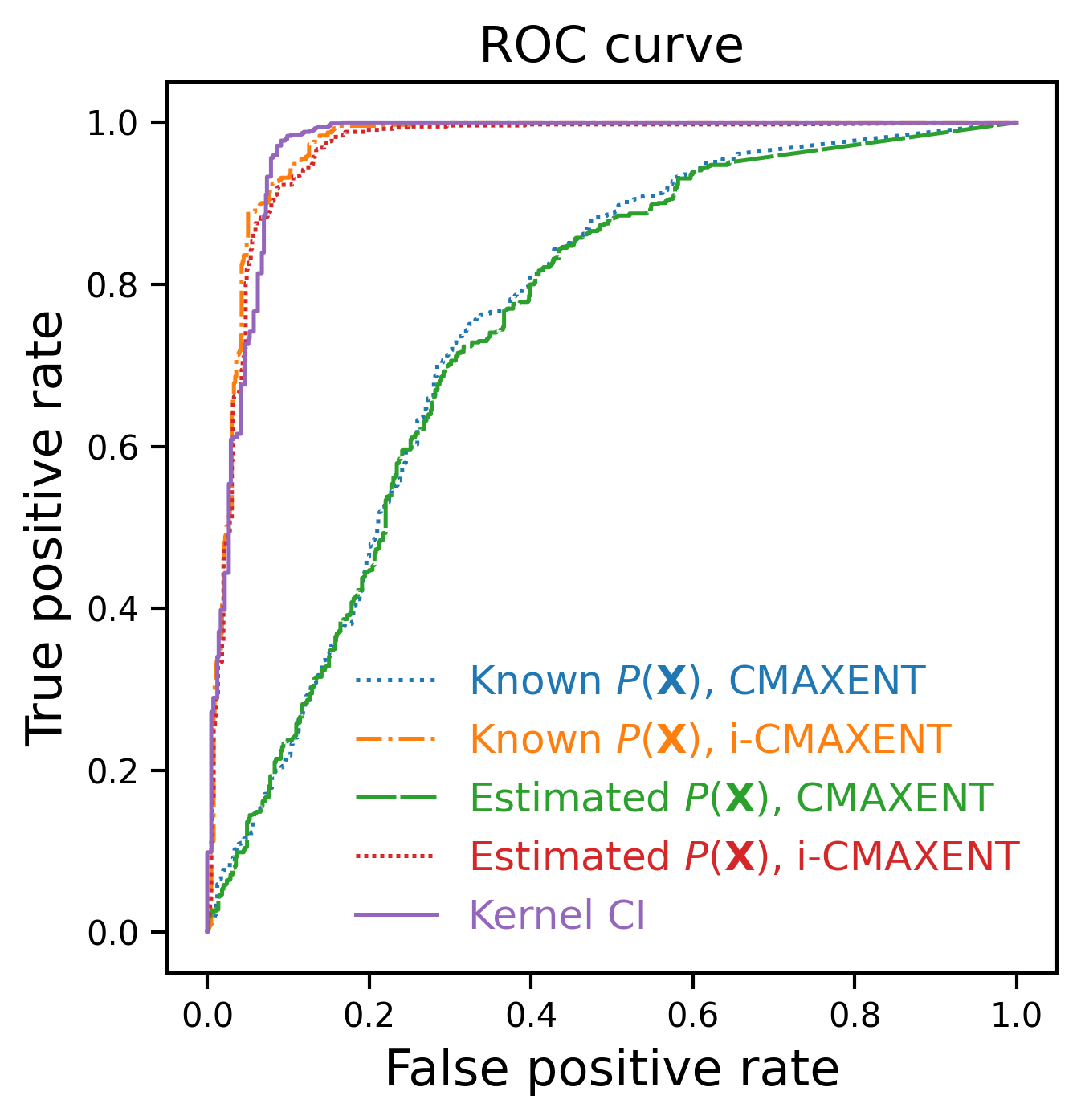}}
    \subfigure[ROC curve for graph (c) \label{fig:roc_overlay_c}]{\includegraphics[width=.33\textwidth]{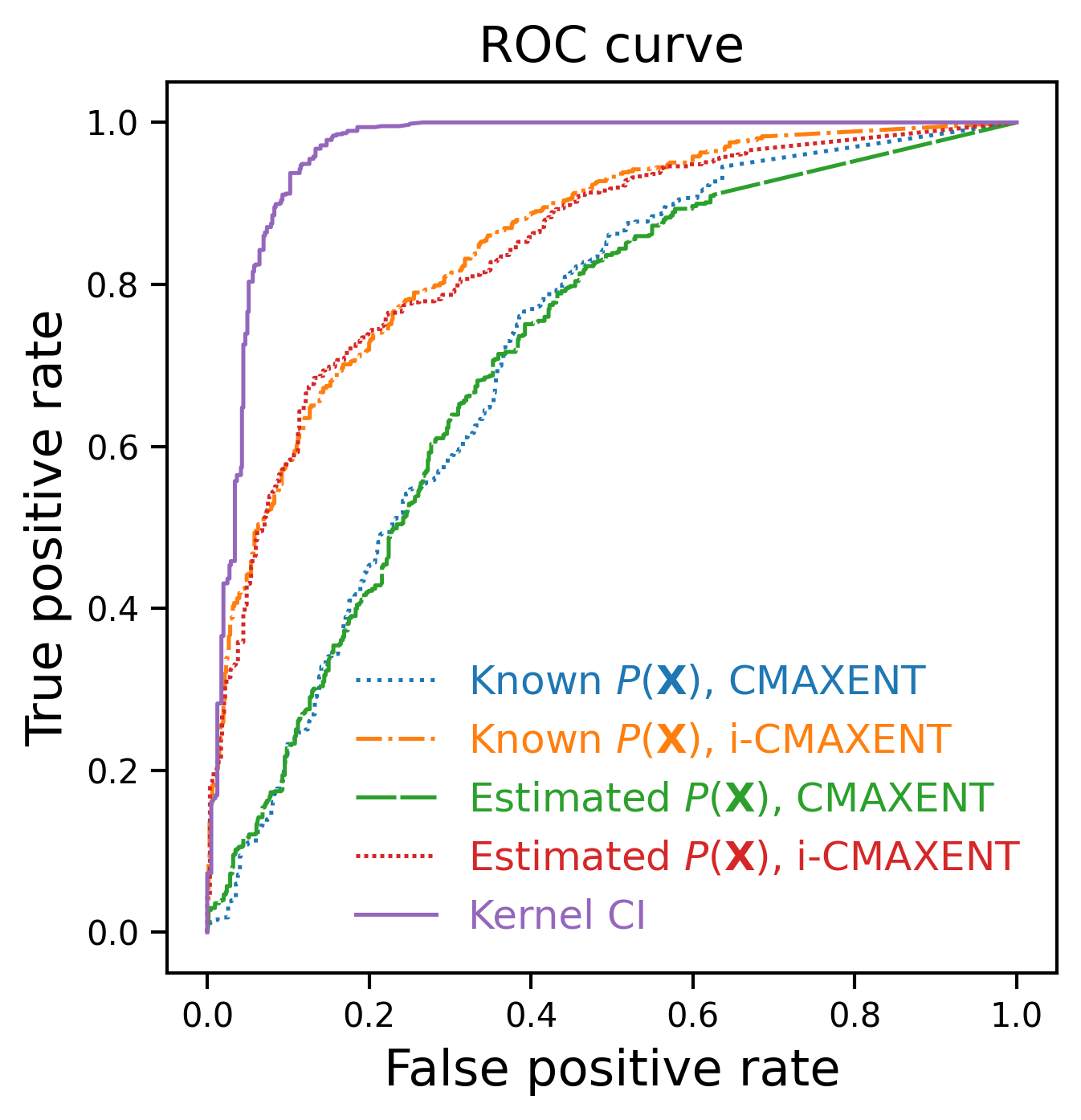}}
    
    \subfigure[ROC curve for varying number of interventional means for graph (a) \label{fig:combined_roc_a}]{\includegraphics[width=.33\textwidth]{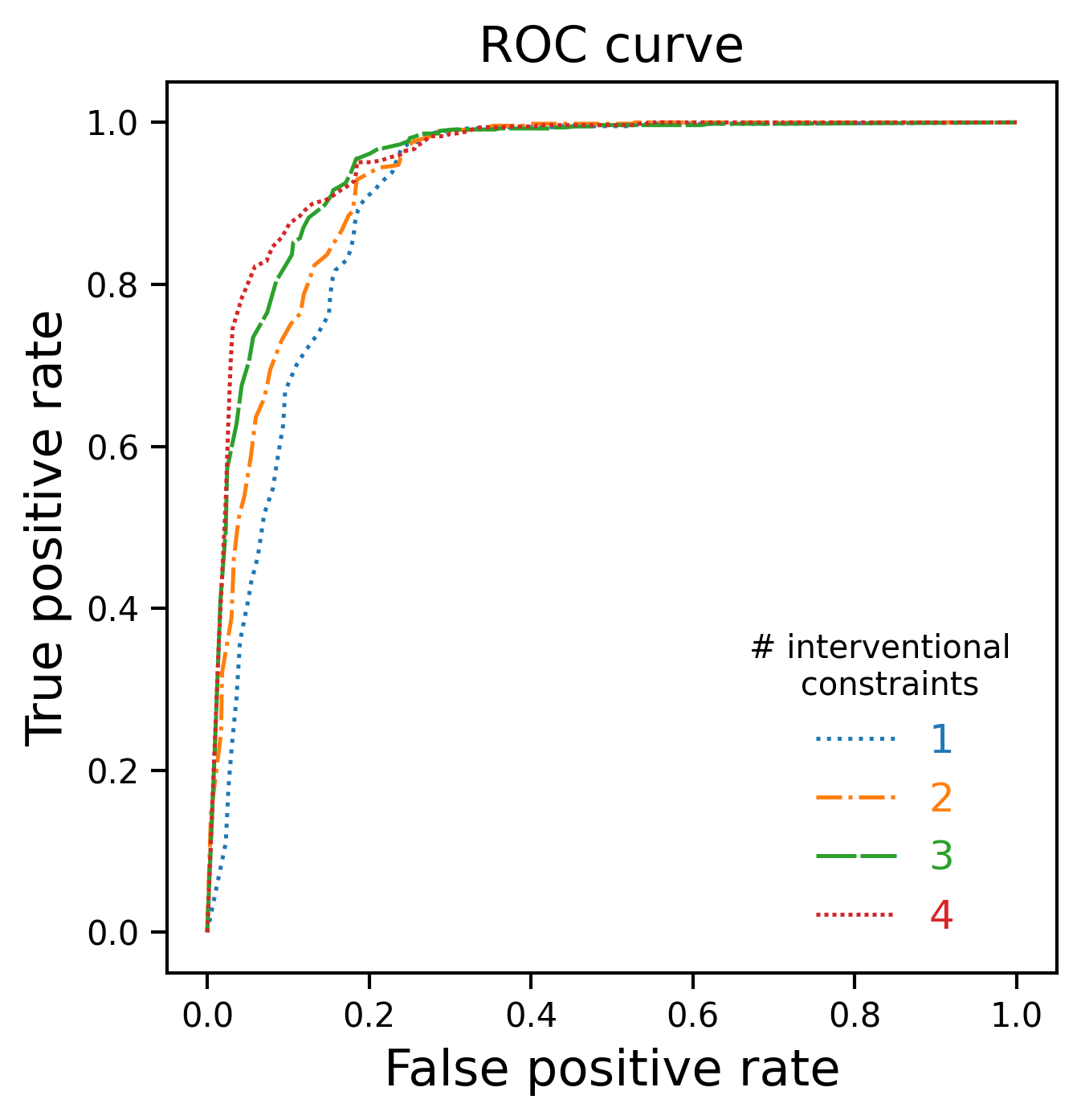}}
    \subfigure[ROC curve for varying number of interventional means for graph (b) \label{fig:combined_roc_b}]{\includegraphics[width=.33\textwidth]{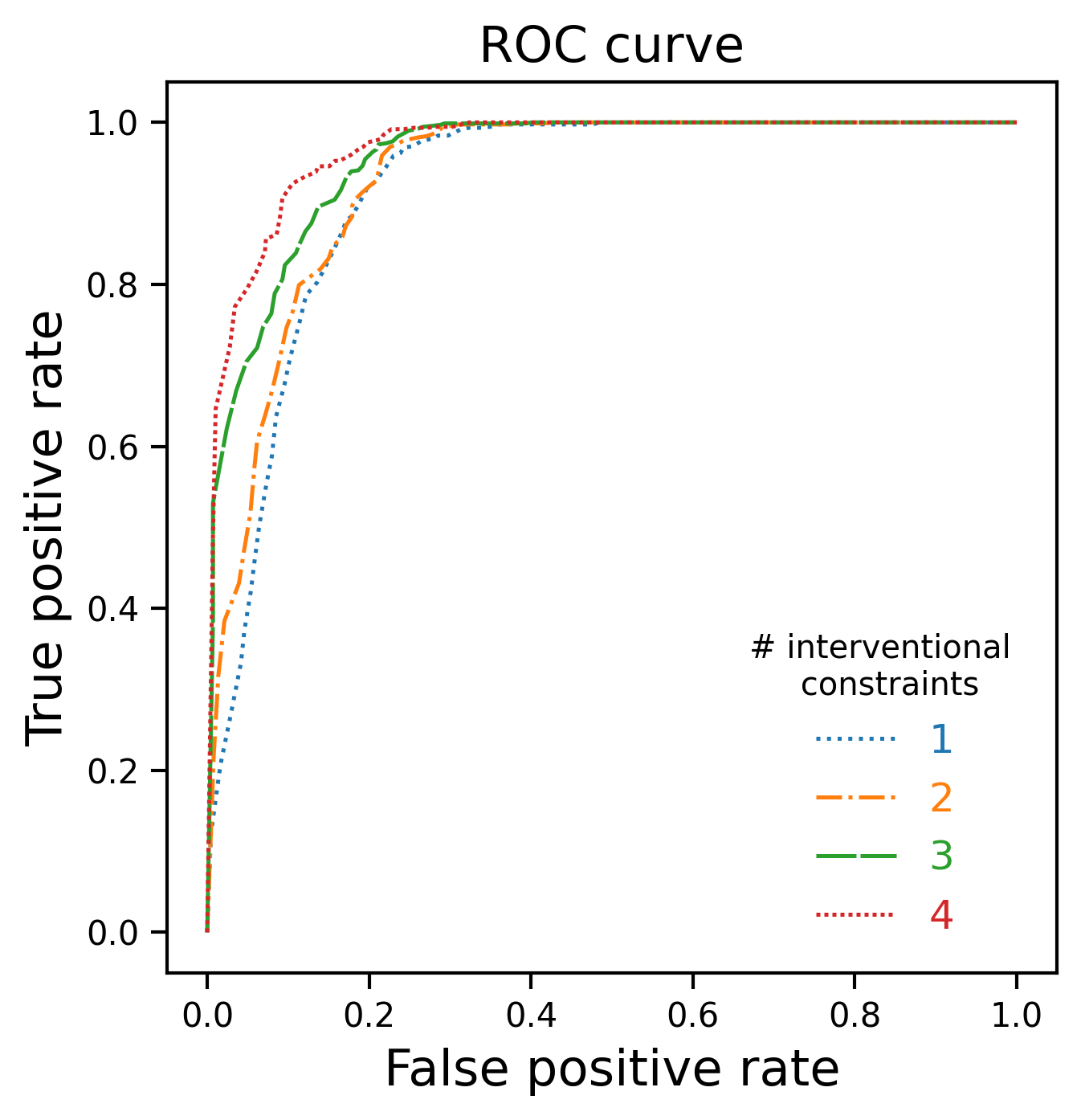}}
    \subfigure[ROC curve for varying number of interventional means for graph (c) \label{fig:combined_roc_c}]{\includegraphics[width=.33\textwidth]{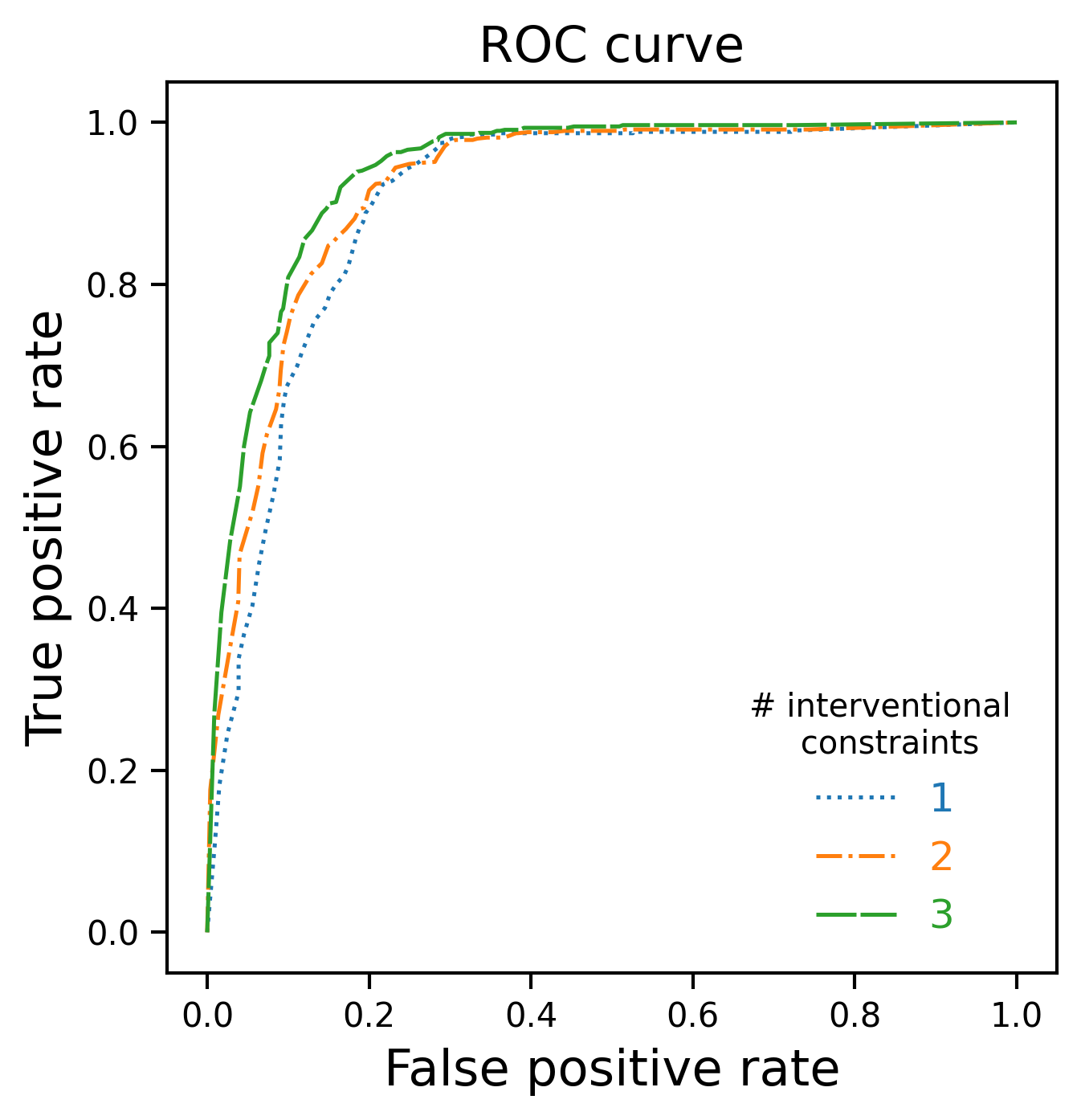}}
    
    \caption{
    Results for parental discovery.
    (\cref{fig:roc_overlay_a}), (\cref{fig:roc_overlay_b}), and (\cref{fig:roc_overlay_c}) show ROC curves for the identification of causal edges between the $X_i$s and $Y$ in setting 1. 
    For {\icmaxent,} we use constraints on all five single-interventional densities $\pdf(Y \mid \ddo(X_i))$. 
    For CMAXENT, we use constraints on the five single-conditional densities $\pdf(Y\mid X_i)$. 
    The KCI test has access to observations from the joint $\pdf(\bX, Y)$.
    For \icmaxent and CMAXENT, we consider two cases:
    (i) The joint observational density of the causes $\pdf(\bX)$ is known (blue and orange line).
    (ii) $\pdf(\bX)$ is estimated (green and red lines) from constraints on the five marginal densities $\pdf(X_i)$.
    Although our approach only uses single-variable interventional constraints as input, it achieves similar performance to the KCI test that uses the full generated dataset.}
\end{figure*}

\textbf{Parental discovery}
\Cref{fig:roc_overlay_a,fig:roc_overlay_b,fig:roc_overlay_c} show the ROC curves corresponding to the graphs in \Cref{fig:graph_exp_a,fig:graph_exp_b,fig:graph_exp_c}, respectively. 
Across all cases, \icmaxent consistently outperforms CMAXENT, and in graphs \Cref{fig:graph_exp_a} and \Cref{fig:graph_exp_b} achieves accuracy close to KCI despite using far less information.
Even for graph \Cref{fig:graph_exp_c} where the potential causes interact with each other, we observe that the difference between \icmaxent and KCI remains modest.

\Cref{fig:combined_roc_a,fig:combined_roc_b,fig:combined_roc_c} show the performance of \icmaxent in scenarios where information about the interventional distributions is provided as constraints only for a fraction of the potential causes, while for the remaining only conditional distributions are used.
We observe that the causes are recovered better as we increase the share of variables for which interventional data is provided.

\begin{figure*}[t!]
    \centering
    \includegraphics[width=\textwidth]{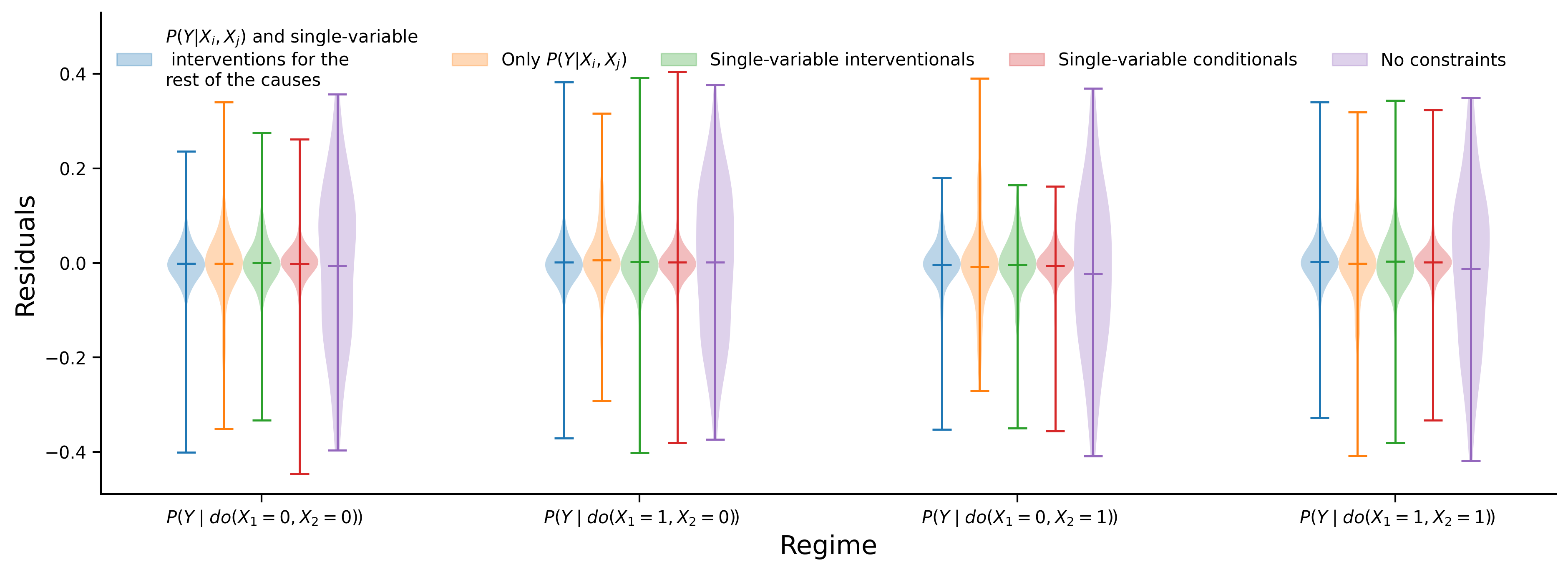}
    \caption{
    Residuals between true and estimated joint interventional distributions.
    The violin plots show the residuals
    between the true and the estimated joint interventional densities $\pdf(Y \mid \ddo(X_1, X_2))$ for five cases that differ in the constraints we use in the estimation. 
    The constraints are: 
    (i) the joint conditional $\pdf(Y\mid X_i, X_j)$ for a randomly chosen pair $X_i,X_j$, and single-variable interventionals $\pdf(Y\mid \ddo(X_k))$ for the rest of the variables (blue);
    (ii) only $\pdf(Y\mid X_i, X_j)$ (orange);
    (iii) single-variable interventionals $\pdf(Y\mid \ddo(X_k))$ for all causes (green);
    (iv) single-variable conditionals $\pdf(Y\mid X_k)$ for all causes (red); and
    (v) no constraints at all (purple).}
    \label{fig:violin_true_vs_estimated}
\end{figure*}

\textbf{Joint interventional distributions from single variable interventions}
\Cref{fig:violin_true_vs_estimated} depicts the  residuals between the estimated joint interventional distributions and the true joint interventional distributions, for interventions on $X_1$ and $X_2$.
Any added constraint improves estimation, with \icmaxent reducing residuals to around $5\%$ even when relying only on either single-variable conditional or interventional information.

Comparing residuals across settings, using only $\pdf(Y\mid X_i, X_j)$ yields the highest variance, while single-variable conditionals give slightly lower variance than the other cases.
Moreover, the extrema (the horizontal marks at the end of each distribution) of the residuals have similar spreads, depending on the regime.

\section{Discussion}
\label{sec:discussion}

\textbf{Causal marginal problem with interventional distributions}
We proved in \ref{th:interventionalMaxent} that the causal marginal problem can be solved using interventional distributions as constraints within the Maximum Entropy framework, yielding solutions in the exponential family. 
This extends CMAXENT to \icmaxent and shows that interventional information from experimental data on subsets of variables can be combined without requiring joint observations. 
\icmaxent therefore provides a principled tool for applications where the joint effect of disjoint treatments must be estimated from limited interventional data.

This extension enables two key tasks: 
parental discovery and estimation of joint interventional distributions. 
In feature selection, \icmaxent achieves performance close to methods with full joint data (KCI, \citep{zhang2011kernel}), while operating under the harder marginal setting. 
For joint interventional distribution estimation, \icmaxent with single-variable interventions performs comparably to using only single-variable conditionals. 
Its advantage is that it can directly incorporate experimental data when available, without distributional constraints \citep{saengkyongam2020learning} or functional form constraints \citep{kekic2025learning}.

\textbf{Limitations}
Our method requires knowing which variables were intervened upon to match expectations in the optimisation problem. 
This is inherent to the Maximum Entropy framework: 
constraints must specify the intervention. 
By contrast, methods like ICP only assume interventions occurred on some subset, but they require access to full joint observational data, which we do not. 
Furthermore, identifiability remains an important question.
It is clear that not all joint conditional distributions can be identified using the method, regardless of the empirical success we observe in \Cref{sec:results}.
We conjecture that some potential functional form constraints, such as the Generalised Additive Model (GAM) \citep{hastie1986generalized} assumption in \citep{kekic2025learning} might be enough for identification with \icmaxent.

\acks{This work was partly supported by the Amazon Hub project PSMEWE A03B “Merging data sources”.}

\bibliography{CausalityInterventionalCMAXENT.bib}

\appendix

\section{Proofs of the main results}\label{app:proofs}

\nonparametricIdentification*

\begin{proof}
    We have
    \begin{align}
        \begin{split}\label{eq:jointConditional1}
            \pdf(Y\mid X_1=0) 
            =& \pdf(Y\mid X_1=0, X_2=0) \pdf(X_2=0\mid X_1=0)\\
            &+ \pdf(Y\mid X_1=0, X_2=1) \pdf(X_2=1\mid X_1=0)
        \end{split}\\
        \begin{split}\label{eq:jointConditional2}
            \pdf(Y\mid X_1=1) 
            =& \pdf(Y\mid X_1=1, X_2=0) \pdf(X_2=0\mid X_1=1)\\
            &+ \pdf(Y\mid X_1=1, X_2=1) \pdf(X_2=1\mid X_1=1)
        \end{split}\\
        \begin{split}\label{eq:jointConditional3}
            \pdf(Y\mid X_2=0) 
            =& \pdf(Y\mid X_1=0, X_2=0) \pdf(X_1=0\mid X_2=0)\\
            &+ \pdf(Y\mid X_1=1, X_2=0) \pdf(X_1=1\mid X_2=0)
        \end{split}\\
        \begin{split}\label{eq:jointConditional4}
            \pdf(Y\mid X_2=1) 
            =& \pdf(Y\mid X_1=0, X_2=1) \pdf(X_1=0\mid X_2=1)\\
            &+ \pdf(Y\mid X_1=1, X_2=1) \pdf(X_1=1\mid X_2=1).
        \end{split}
    \end{align}
Using \Cref{eq:jointConditional1,eq:jointConditional2,eq:jointConditional3,eq:jointConditional4} we can find the following expressions:
    \begin{align}
        \pdf(Y\mid X_1=0, X_2=0)
        =& \frac{\pdf(Y\mid X_1=0) - \pdf(Y\mid X_1=0, X_2=1) \pdf(X_2=1\mid X_1=0)}{\pdf(X_2=0\mid X_1=0)}\\
        \pdf(Y\mid X_1=1, X_2=1) 
        =& \frac{\pdf(Y\mid X_1=1) - \pdf(Y\mid X_1=1, X_2=0) \pdf(X_2=0\mid X_1=1)}{\pdf(X_2=1\mid X_1=1)}\\
        \pdf(Y\mid X_1=1, X_2=0)
        =& \frac{\pdf(Y\mid X_2=0) - \pdf(Y\mid X_1=0, X_2=0) \pdf(X_1=0\mid X_2=0)}{\pdf(X_1=1\mid X_2=0)}\\
        \pdf(Y\mid X_1=0, X_2=1) 
        =& \frac{\pdf(Y\mid X_2=1) - \pdf(Y\mid X_1=1, X_2=1) \pdf(X_1=1\mid X_2=1)}{\pdf(X_1=0\mid X_2=1)},
    \end{align}
with which we can replace recursively to find the following expression for $\pdf(Y\mid X_1=0, X_2=0)$:
    \begin{align}
        \begin{split}
        \pdf(Y\mid X_1=0, X_2=0)
        =    \pdf &(Y\mid X_1=1, X_2=1) \pdf(X_2=0\mid X_1=0)^{-1}\\
                &[\pdf(Y\mid X_1=0) - 
                    \pdf(X_2=1\mid X_1=0)\pdf(X_1=0\mid X_2=1)^{-1}\\
                &\quad  [\pdf(Y\mid X_2=1) - 
                            \pdf(X_1=1\mid X_2=1)\pdf(X_2=1\mid X_1=1)^{-1}\\
                &\quad\quad     [\pdf(Y\mid X_1=1) -
                                    \pdf(X_1=0\mid X_2=0)\pdf(X_1=1\mid X_2=0)^{-1}\\
                &\quad\quad\quad        [\pdf(Y\mid X_2=0) - 
                                            \pdf(Y\mid X_1=0, X_2=0)]]]],
        \end{split}
    \end{align}
and after some algebra we can find an expression of $\pdf(Y\mid X_1=0, X_2=0)$ as a function of only bivariate distributions, which we can replace again in \Cref{eq:jointConditional1,eq:jointConditional2,eq:jointConditional3,eq:jointConditional4} to find the other distributions of interest.
\end{proof}

\interventionalMaxent*
\begin{proof}
We start by setting up the Lagrangian, where we have one $\lambda$ for each constraint in our optimisation problem.
\begin{align}
\begin{split}
    \mathcal{L} =& -\sum_{y,\bx}\pdf(y\mid\bx)\pdf(\bx)\log \pdf(y\mid\bx)\\
    &+ \sum_{k=1}^{K}\lambda_{k}(\sum_{y,\bx}\pdf(y\mid\bx)\pdf(\bx)f_{k}(y,\bx) - \tilde{f}_{k})\\
    &+ \sum_{j=1}^{J}\sum_{\bx_{S_{j}}}\lambda_{j}^{\bx_{S_{j}}}(\sum_{y}\pdf(y\mid \bx_{S_{j}})g_{j}(y,\bx_{S_{j}})-\tilde{g}_{j}(\bx_{S_{j}}))\\
    &+ \sum_{l=1}^{L}\sum_{\bx_{S^C_{l}}, \bx_{S^I_{l}}}\lambda_{l}^{\bx_{S^C_{l}}, \bx_{S^I_{l}}}(\sum_{y}\pdf(y\mid \ddo(\bx_{S^I_{l}}), \bx_{S^C_{l}})h_{l}(y,\bx_{S^C_{l}}, \bx_{S^I_{l}})-\tilde{h}_{l}(\bx_{S^C_{l}}, \bx_{S^I_{l}}))\\
    &+ \sum_{\bx}\lambda^{\bx}(\sum_{y}\pdf(y\mid\bx)-1)
\end{split}
\end{align}

The above Lagrangian contains an interventional distribution that depends on $\pdf(y \mid \bx)$.
In order to optimise the Lagrangian we need to replace $\pdf(y \mid \ddo(x))$ as a function of $\pdf(y \mid x)$.
In the following, we will use $s_{j}$ and $s^I_{l}$ to denote elements of the set $S_{j}$ and $S^I_{l}$.
In the following derivation, we will denote the complement of $S_j$ in $\bX$ as $S_j'$.
\begin{align}
    \pdf(y\mid \bx_{S_{j}})
    =& \sum_{\bx_{S_j'}} \pdf(y, \bx_{S_j'}\mid \bx_{S_{j}})\\
    =& \sum_{\bx_{S_j'}} \pdf(y\mid \bx_{S_j'}, \bx_{S_{j}})\pdf(\bx_{S_j'}\mid \bx_{S_{j}})\\
    =& \sum_{\bx_{S_j'}}\pdf(y\mid \bx)\pdf(\bx_{S_j'}\mid \bx_{S_{j}}).
\end{align}

Using this technique, we can express $\pdf(y\mid \ddo(\bx_{S^I_{l}}), \bx_{S^C_{l}})$ as a function of $\pdf(y \mid \bx)$, that is,
$\pdf(y\mid \ddo(\bx_{S^I_{l}}), \bx_{S^C_l}) = \pdf(y\mid\bx)\overline{\pdf(y\mid \ddo(\bx_{S^I_{l}}), \bx_{S^C_{l}})}$, where $\overline{\pdf(y\mid \ddo(\bx_{S^I_{l}}), \bx_{S^C_{l}})}$ is a combination of sums and products of known observable quantities.

Differentiating with respect to each ${\pdf(Y=y\mid \bX=\bx)}$ and all the multipliers, we obtain
\allowdisplaybreaks
\begin{align}\label{eq:derivativesLagrangian}
    \begin{split}
    \frac{\partial\mathcal{L}}{\partial \pdf(y\mid\bx)} = &-\pdf(\bx)[\log \pdf(y\mid\bx) + 1]\\*
    &+ \sum_{k=1}^{K}\lambda_{k}\pdf(\bx)f_{k}(y,\bx)\\*
    &+ \sum_{j=1}^{J}\sum_{\bx_{S_{j}}}\lambda_{j}^{\bx_{S_{j}}}\left[\sum_{\bx_{S_j^C}}\pdf(\bx_{S_j^C}\mid \bx_{S_{j}})g_{j}(y,\bx_{S_{j}})\right]\\*
    &+ \sum_{l=1}^{L}\sum_{\bx_{S^C_{l}}, \bx_{S^I_{l}}}\lambda_{l}^{\bx_{S^C_{l}}, \bx_{S^I_{l}}}\overline{\pdf(y\mid \ddo(\bx_{S^I_{l}}), \bx_{S^C_{l}})}h_{l}(y,\bx_{S^C_{l}}, \bx_{S^I_{l}})\\*
    &+ \lambda^{\bx}
    \end{split}\\
    \frac{\partial\mathcal{L}}{\partial \lambda_{k}} = &\sum_{y,\bx}\pdf(y\mid\bx)\pdf(\bx)f_{k}(y, \bx) - \tilde{f}_{k},\, \forall\, k\\
    \frac{\partial\mathcal{L}}{\partial \lambda_{j}^{\bx_{s_{j}}}} = &\sum_{y}\pdf(y\mid \bx_{s_{j}})g_{j}(y,\bx_{s_{j}})-\tilde{g}_{j}(\bx_{s_{j}}),\, \forall\, s_{j}\in S_{j},\, \forall\, j\\
    \frac{\partial\mathcal{L}}{\partial \lambda_{l}^{\bx_{s^C_{l}}, \bx_{s^I_{l}}}} = &\sum_{y}\pdf(y\mid \ddo(\bx_{S^I_{l}}), \bx_{S^C_{l}})h_{l}(y,\bx_{S^C_{l}}, \bx_{S^I_{l}})-\tilde{h}_{l}(\bx_{S^C_{l}}, \bx_{S^I_{l}}),\, \forall\, s^C_{l}\in S^C_{l}, \forall\, s^I_{l}\in S^I_{l}, \, \forall\, l\\
    \frac{\partial\mathcal{L}}{\partial \lambda^{\bx}} = &\sum_{y}\pdf(y\mid\bx)-1,\, \forall\, \bx
\end{align}
Notice that the derivative with respect to the Lagrangian in \Cref{eq:derivativesLagrangian}
is 0 for those functions $f_{k}, g_{j}$ and $h_{l}$ for which $\bx\notin S_{j}, S_{k}, S^C_{l}, S^I_{l}$.

We then find the solution to $\pdf(y\mid\bx)$ when the above equations are equal to 0. From \Cref{eq:derivativesLagrangian} we get
\begin{align}
\begin{split}
    \pdf(y\mid\bx) = \exp&\left[\sum_{k=1}^{K}\lambda_{k}f_{k}(y,\bx) 
    + \frac{1}{\pdf(\bx)}\sum_{j=1}^{J}\sum_{\bx_{S_{j}}}\lambda_{j}^{\bx_{S_{j}}}g_{j}(y,\bx_{S_{j}})\sum_{\bx_{S_j^c}}\pdf(\bx_{S_j^c}\mid \bx_{S_{j}})\right.\\
    &\left.+ \frac{1}{\pdf(\bx)}\sum_{l=1}^{L}\sum_{\bx_{S^C_{l}}, \bx_{S^I_{l}}}\lambda_{l}^{\bx_{S^C_{l}}, \bx_{S^I_{l}}}\overline{\pdf(y\mid \ddo(\bx_{S^I_{l}}), \bx_{S^C_{l}})}h_{l}(y,\bx_{S^C_{l}}, \bx_{S^I_{l}})
    + \frac{1}{\pdf(\bx)}\lambda^{\bx}\vphantom{\sum_{k=1}^{K}}\right].
\end{split}
\end{align}
Notice that this equation is well-defined as long as $\pdf(\bx)>0$ for all $\bx$.
Because the elements inside the exponential depend on $\bx$, we can rename $\lambda$ with $\tilde{\lambda}$. 
In addition, we gather the constants into the normalizing constant, which depends on $\bx$, giving us
\begin{align}\label{eq:interventionalExponentialFamily}
\begin{split}
    \pdf(y\mid\bx) 
    = \exp& \left[\sum_{k=1}^{K}\lambda_{k}f_{k}(y,\bx) 
    + \sum_{j=1}^{J}\sum_{\bx_{S_{j}}} \tilde{\lambda}_{j}^{\bx_{S_{j}}}g_{j}(y,\bx_{S_{j}}) \right.\\
    &+ \left. \sum_{l=1}^{L}\sum_{\bx_{S^C_{l}}}\sum_{\bx_{S^I_{l}}}\tilde{\lambda}_{l}^{\bx_{S^C_{l}}, \bx_{S^I_{l}}}h_{l}(y,\bx_{S^C_{l}}, \bx_{S^I_{l}}) 
    + \beta(\bx)\right]
\end{split}
\end{align}
as required.
\end{proof}

\identifiabilityIncomingArrows*
\begin{proof}
Because of the assumed generative process, the causal sufficiency assumption,
and the fact that $Y$ is the only potential child of $X_{j}$,
there are no ``bidirectional'' arrows connected to any child of $X_{j}$ in the sense of \cite{tian2002general}. As a result, the conditions of Theorem 2 in \cite{tian2002general} apply and $\pdf(Y\mid \ddo(X_{j}))$ is identifiable using observational quantities.

We have just proved the identifiability of the interventional distribution.
Now we would like to prove that the set 
$\bX'=\bX\setminus X_{j}$ is a valid adjustment set.
This is true for the following reasons.
First, the assumption of the three levels of the generative process, 
which has as a consequence that there is no collider 
(or descendants of a collider)
between $Y$ and the elements in $\bX'$.
Second, Assumption (1) which states 
there is no unobserved confounder between $\bX$ and $Y$,
thus $\bX'$ blocks any potential backdoor path 
between $X_{j}$ and $Y$ through any confounder.
\end{proof}

\section{Computation of Maxent through norm minimisation}
\label{app:normMinimisation}

As shown in \Cref{th:interventionalMaxent}, the interventional maximum entropy solution is equivalent to maximum likelihood where the dual variables are the parameters of the exponential family.
We will now show how the maximum likelihood problem can be expressed as the minimisation between the difference between the given empirical averages and the expectations of the functions given the exponential family distribution.
The way we maximise \Cref{eq:interventionalExponentialFamily} for all our data, with respect to the Lagrange multipliers (the $\lambda_{i}$), is by making equal to 0 the derivative of the log-likelihood with respect to the parameters:
\begin{align}
    \frac{\partial\log \pdf(Y=y\mid \bX=\bx)}{\partial \lambda_{i}} &=  
    \frac{1}{N}\sum_{n=1}^{N}f_{k}(y^{n},\bx^{n}) - \frac{f_{k}(y,\bx)\sum_{k}\lambda_{k}f_{k}(y,\bx)}{\exp\sum_{y}\sum_{k}\lambda_{k}f_{k}(y,\bx)}=0.
\end{align}
Because we have empirical averages, the previous equation (for the whole data) becomes
\begin{align}\label{eq:empiricalEqualsExpectation}
  \tilde{f}_{k}(y,\bx) - \mathbb{E}_{\pdf_{\lambda}}[f_{k}(y,\bx)]=0,
\end{align}
which we can compute by using any method that minimises the difference between the observed empirical average and the entailed expectation using the exponential family distribution.
That is, we can compute the solution to the dual problem of the exponential family with
\begin{align}
  \lambda &= \argmin_{\lambda}\lVert\tilde{f}_{k}(y,\bx) - \mathbb{E}_{\pdf_{\lambda}}[f_{k}(y,\bx)]\rVert
\end{align}

In the synthetic experiments Section (6), there were situations where $\pdf(\bx)=0$.
We fixed this by adding a machine epsilon to all possible combinations and renormalizing.
\section{Relative difference estimator}
\label{app:relativeDifference}

The relative difference estimator introduced in \citep{garrido2022obtaining} is an estimator of how close two parameters are to each other so that an analyst can decide whether there exist conditional independence. 
The relative difference estimator can take values between 0 and 1.
However, there is no probabilistic analysis of the estimator, so one cannot consider the value of the relative difference estimator as a well-calibrated probability of the difference between multipliers (or difference between a multiplier and 0).
The estimator is defined as
\begin{align}
\theta_i = \frac{\left|\lambda_i^{1}-\lambda_i^{2}\right|}{\max\{|\lambda_i^{1}|, |\lambda_i^{2}|, \left|\lambda_i^{1}-\lambda_i^{2}\right|, 1\}} \;\in\left[0,1\right] \; ,
\end{align}
where $\lambda_i^1, \lambda_i^2$ are the two Lagrange multipliers for the constraints associated with $X_i$.
\section{Details of data generation processes}
\label{app:detailsDGP}

The data generation process consists of two steps. For each of the three graph structures we first sample the parameters of a generative process.
For this we sample for each variable $Z\in\{\bU,\bX,Y\}$ a value for $\pdf(Z{=}1\mid PA_{Z})$ by sampling from a uniform distribution between $0.1$ and $0.9$. 
We do this for each combination of values of the parents $PA_{Z}$ of $Z$.
For $\bX$ and $\bU$ he parents are fixed by the respective graph structure, but for $Y$ we randomise the parents with probability $0.5$ for including any particular $X_i$ as a parent of $Y$.
Next, we sample observations using this generative process. 
That is, we do ancestral sampling where we sample each variable $Z$ using a Bernoulli distribution with the before chosen probability $\pdf(Z=1\mid PA_Z)$.
To obtain the data from the interventional distributions, we simply set the variable to the particular intervened value and then proceed in the generative process.

\section{Optimisation and convergence}
\label{app:convergence}

To find the Lagrange multipliers, using the norm minimisation procedure explained in \Cref{app:normMinimisation}, we use the Broyden–Fletcher–Goldfarb–Shanno (BFGS) algorithm  implemented in the JAX Python library \citep{jax2018github}.
We consider an estimation as converged if the norm (the sum of the squared of the residuals between the empirical averages given as constraints and the expectations entailed by the exponential family distribution) are less than $0.01$, or if the optimiser terminates successfully.
In some cases, convergence in this sense is not achieved on the first application of the optimisation algorithm.
When this happens, we simply use the optimisation algorithm again using the previously found Lagrange multipliers as the initial values for the optimisation.
We repeat this process until convergence in the above sense is achieved.
We found that in most cases one extra optimisation is enough to achieve convergence, there were four cases where we had to run the optimiser more than two times: twice for three, once for four, and once for five.
\section{Other faithfulness assumptions}
\label{app:faithfulness}

The main goal of faithfulness assumptions is to make conclusions about the causal graph from statistical conclusions.
The most important use of this is to do causal discovery from finite sample data.
We begin with one of the simplest notions of faithfulness \citep{spirtes2000causation}:
\begin{assumption}[Faithfulness]
\label{asmp:faithfulness}
A distribution $\cdf$ is faithful to a DAG $\cG$ if no conditional independence relations other than the ones entailed by the Markov property are present.
\end{assumption}

Since \Cref{asmp:faithfulness} does not depend parametrically on how we measure conditional independence, \citet{zhang2002strong} modify this notion to include a $\lambda$ parameter:
\begin{assumption}[$\lambda$-strong Faithfulness]
\label{asmp:lambda-strong-faithfulness}
Given $\lambda\in(0,1)$, a multivariate Gaussian distribution $\cdf$ is said to be $\lambda$-strong-faithful to a DAG $\cG=(V,E)$ if for any $i,j\in V$ and any $S\subseteq V\setminus\set{i,j}$ such that $j$ is d-separated from $i$ given $S$ if and only if $\absoluteValue{\Corr(X_i, X_j \mid X_S)} \leq \lambda$.
\end{assumption}

As \citet{uhler2013geometry} prove, and in contrast with the usual Faithfulness assumption, the set of distributions for which \Cref{asmp:lambda-strong-faithfulness} hold does not have Lebesgue measure zero.
There is an important implication:
it has been proven that the usual faithfulness assumption holds ``almost surely'' as the set of distributions for which the assumption does not hold has measure zero.
However, if this were to change, that is, if this set of distributions did not have Lebesgue measure zero, then there are uncountably many such distributions, and the possibility of being in such scenario is non-zero.
Of course if faithfulness does not hold, the causal graphs built from conditional independence tests might be wrong.
More recently \citet{olko2025since} studied strong faithfulness in the context of neural causal discovery and non linear models, confirming the results found by \citet{uhler2013geometry} in the linear case and the PC algorithm.

Other notions of faithfulness have been introduced in \citet{zhang2008detection}:
\begin{assumption}[Restricted $\lambda$-strong Faithfulness]
\label{asmp:resctricted-lambda-strong-faithfulness}
Given $\lambda\in(0,1)$, a multivariate Gaussian distribution $\cdf$ is said to be restricted $\lambda$-strong-faithful to a DAG $\cG=(V,E)$ if both of the following hold:
\begin{description}
    \item[(i)] 
    $\min\set{
    \absoluteValue{\Corr(X_i,X_j\mid X_S)}: 
    (i,j)\in E,\, 
    S\subseteq V\setminus\set{i,j}, 
    \text{ such that } \absoluteValue{S}\leq 
    \text{deg}(\cG)} > \lambda$, where $\text{deg}(G)$ denotes the sum of the indegree and the outdegree of nodes in $\cG$. 
    This condition is also known as Adjacency faithfulness \citep{zhang2008detection}, according to \citet{uhler2013geometry}.
    \item[(ii)]
    $\min\set{\absoluteValue{
    \Corr(X_i,X_j\mid X_S)}: 
    (i,j,S)\in N_\cG} > \lambda$, where $N_\cG$ is the set of triples $(i,j,S)$ such that $i,j$ are not adjacent but there exists $k\in V$ making $(i,j,k)$ an unshielded triple (a triple of the form $i\rightarrow k \leftarrow j$) and $i,j$ are not d-separated given $S$. 
    This condition is also known as Orientation faithfulness \citep{zhang2008detection}, according to \citet{uhler2013geometry}.
\end{description}
\end{assumption}

\begin{assumption}[Triangle Faithfulness]
\label{asmp:triangle-faithfulness}
Suppose the true causal DAG of a set of nodes $V$ is $\cG$.
Let $X,Y,Z$ be any three variables that form a triangle in $\cG$ (i.e., each pair of vertices is adjacent):
\begin{description}
    \item[(i)]
    If $Y$ is a noncollider on the path $X\mathdash Y \mathdash Z$, then $X$ and $Z$ are not independent conditional on any subset of $V\setminus\set{X,Z}$ that does not contain $Y$.
    \item[(ii)]
    If $Y$ is a collider on the path, (i.e., $X\rightarrow Y \leftarrow Z$), then $X$ and $Z$ are not independent conditional on any subset of $V\setminus\set{X,Z}$ that contains $Y$.
\end{description}
\end{assumption}

Out of these definitions of faithfulness only \Cref{asmp:faithfulness,asmp:lambda-strong-faithfulness} have been studied to a certain degree, given the difficulty of analysing these assumptions. 
We believe that the assumption of faithful f-expectations possess the same genericity as \Cref{asmp:faithfulness}.

\end{document}